\documentclass{article}




\usepackage[preprint]{neurips_2019}


\usepackage[utf8]{inputenc} 
\usepackage[T1]{fontenc}    
\usepackage{hyperref}       
\usepackage{url}            
\usepackage{booktabs}       
\usepackage{amsfonts}       
\usepackage{nicefrac}       
\usepackage{microtype}      
\pdfoutput=1

\usepackage{url}
\usepackage{amsmath}
 
\usepackage{amsthm}
\usepackage{algorithm}
\usepackage[noend]{algorithmic}
\usepackage{verbatim}
\usepackage{graphicx}
\usepackage[space]{grffile}
\usepackage{subfigure}
\usepackage{tikz}
\usetikzlibrary{arrows}
\newtheorem{theorem}{Theorem}

\newtheorem{lemma}{Lemma}

\newtheorem{corollary}{Corollary}
\newtheorem{proposition}{Proposition}

\theoremstyle{definition}

\newtheorem*{deff}{Definition}
\newtheorem{claim}{Claim}
\newtheorem{examp}{Example}

\graphicspath{{figures/}}
\usepackage{amsmath}
\usepackage{mathrsfs}
\usepackage{amssymb}
\usepackage{booktabs}
\usepackage{changepage}
\usepackage{xspace}
\usepackage{hyperref}


\newcommand{\eg}{\textit{e.g.}\xspace}

\newcommand{\epsi}[0]{ \varepsilon }
\newcommand{\reals}[0]{ \mathbb{R}_{\ge0} }

\newcommand{\ff}[1]{ f \left( #1 \right) }

\newcommand{\tg}{\texttt{TG}\xspace}
\newcommand{\threshold}{\texttt{THRESHOLD}\xspace}

\newcommand{\maxuni}{\texttt{MAX-UNION}\xspace}

\newcommand{\isg}{\texttt{ISG}\xspace}
\newcommand{\maxiOne}{\ensuremath{\texttt{MAXI}_1}\xspace}
\newcommand{\maxiP}{\ensuremath{\texttt{MAXI}_p}\xspace}
\newcommand{\maxi}{\ensuremath{\texttt{MAXI}}\xspace}
\newcommand{\unc}{\texttt{UNCONSTRAINED-MAX}\xspace}

\newcommand{\stopGain}{\epsi M/ n}

\newcommand{\psystem}{\mathcal I}
\newcommand{\pint}{\mathcal I}

\DeclareMathOperator*{\argmax}{arg\,max}

\usepackage{tikz}
\usetikzlibrary{arrows}

\usepackage{xcolor}

\hypersetup{
     colorlinks,
     linkcolor={red!80!black},
     citecolor={blue!80!black},
     urlcolor={blue!80!black}
}


\title{A Note on Submodular Maximization over Independence Systems}


%

\author{%
  Alan Kuhnle\thanks{Webpage: \url{http://www.alankuhnle.com}} \\
  Department of Computer Science \\
  Florida State University\\
  Tallahassee, FL \\
}

\begin{document}

\maketitle

\begin{abstract}
  In this work, we consider the maximization of submodular functions constrained
  by independence systems. Because of the wide applicability
  of submodular functions, this problem has been extensively studied in
  the literature, on specialized independence systems. 
  For general independence systems, even when 
  all of the bases of the independence system
  have the same size,
  we show that for any $\epsilon > 0$,
  the problem is hard to approximate within $(2/n)^{1-\epsilon}$, where 
  $n$ is the size of the ground set. In the same context,
  we show the greedy algorithm does obtain a ratio of $2/n$ under
  a mild additional assumption.
  Finally,
  we provide the first nearly linear-time algorithm for 
  maximization of non-monotone
  submodular functions over $p$-extendible independence systems.
\end{abstract}

\section{Introduction} \label{sect:intro}
\begin{algorithm}
   \caption{\texttt{GREEDY}$(f, \psystem )$: The Greedy Algorithm}
   \label{alg:greedy}
   \begin{algorithmic}[1]
     \STATE {\bfseries Input:} $f : 2^U \to \reals$, $\psystem$: independence system
     \STATE {\bfseries Output:} $G \subseteq U$, such that $G \in \pint$.
     \WHILE {$G$ is not maximal in $\pint$}
     \STATE $g \gets \argmax_{s \in U : G \cup \{ s \} \in \pint} f( G \cup \{ s \} )$
     \STATE $G \gets G \cup \{ g \}$
     \ENDWHILE
     \STATE \textbf{return} $G$
\end{algorithmic}
\end{algorithm}
Submodularity\footnote{A function $f:2^U \to \reals$ is \emph{submodular} if for every 
$S \subseteq T \subseteq U$, $x \in U \setminus T$, 
$f( T \cup \{ x \} ) - f(T) \le f( S \cup \{ x \} ) - f( S )$.} captures an important diminishing-returns property of 
discrete functions. 
Submodular set functions arise from
\eg
viral marketing \citep{Kempe2003}, data summarization \citep{Mirzasoleiman2015},
and sensor placement \citep{Krause2008a}.
The optimization
of these functions has been studied subject to various types of
independence system\footnote{An \emph{independence system} $\mathcal I$ 
on the set $U$ is a collection of subsets of $U$
such that 
(i) $\mathcal I$ is nonempty, and
(ii) if $S \in \mathcal I$ and
  $T \subseteq S$, then $T \in \mathcal I$.} constraints, including
cardinality \citep{Nemhauser1978}, matroid \citep{Fisher1978}, and the more general independence
systems \citep{Calinescu2011}. 
Formally, 
the problem (\texttt{MAXI}) considered
in this work is the following: given submodular function
$f:2^U \to \reals$ and independence
system $\mathcal I$ on $U$, determine
\begin{equation*}
  \argmax_{S \in \mathcal I} f(S).
\end{equation*}

Even on an independence system where
maximal independent sets have the same size, the greedy algorithm
may return arbitrarily bad solutions for \maxi.
Our results
indicate that 
some exchange property between independent sets must exist if
the problem is to be tractable. 
\paragraph{Contributions}
Our main contributions are summarized as follows. 
\begin{itemize}
\item Let \maxiOne denote the 
  subclass of independences systems where maximal independent sets have the same size.
  We show that \maxiOne admits no polynomial-time algorithm
  with approximation ratio better than $(2/n)^{1-\epsilon}$ unless NP = ZPP,
  even when the submodular function $f$ is restricted to be monotone; here,
  $n = |U|$ is the size of the ground set, and $\epsilon > 0$ is arbitrary.
On the other hand, under the condition that the system has two disjoint bases,
the greedy algorithm does obtain a ratio of $2/n$.
Intuitively, the difficulty of approximation on a $p$-system arises from
the lack of any exchange property between the independent sets.


\item Also, we provide a deterministic algorithm TripleGreedy (Alg. \ref{alg:sg}),
which has the ratio $\approx 1/(4 + 2p)$ on $p$-extendible systems in $O( n \log n )$ 
function evaluations, when the objective function is submodular but not necessarily monotone. 
This is the first 
approximation algorithm on $p$-extendible systems whose runtime is 
linear up to a logarithmic factor in the size $n$ of the ground set and is 
independent of both $p$ and the the maximum size $k$ of any 
independent set.
In prior literature, the fastest randomized algorithm is that of \citet{Feldman2017}, which
achieves expected ratio $1 / (p + 2 + 1/p)$ in $O(n + nk/p)$ evaluations,
while the fastest deterministic algorithm is also by \citet{Feldman2017}
and achieves ratio $1/\left( p + O(\sqrt{p}) \right)$ in $O \left( nk \sqrt{p} \right)$
evaluations.

\end{itemize}

\paragraph{Related work}
The maximization of monotone, submodular functions over independence systems has a long
history of study; 
\citet{Fisher1978} proved the approximation ratio of $1/(p+1)$ for the greedy algorithm when
the independence system is an intersection of $p$ matroid constraints, which
is a special case of a $p$-extendible system. This ratio for the greedy algorithm
was extended to $p$-extendible systems by \citet{Calinescu2011}, as well
as to the more general $p$-system constraint. 
A similar ratio for a faster, thresholded greedy algorithm and $p$-system constraint 
was also given by \citet{Badanidiyuru2014}. 

For the special case when the independence system is a single matroid or cardinality
constraint, better approximation guarantess have been obtained:
in \citet{Calinescu2011}, an optimal $(1-1/e)$-approximation is given when $f$ is monotone
and the independence system is a matroid. For further information,
the reader is referred to the survey of \citet{Buchbinder2018a} and references therein.

When $f$ is non-monotone and the independence system is a $p$-extendible system, \citet{Gupta2010} provided an $\approx 1 / (3p)$-approximation in $O( nkp )$ function evaluations; this was improved by 
\citet{Mirzasoleiman2016} to $\approx 1 / (2p)$ with the same time complexity, and
\citet{Feldman2017} improved this to a ratio of $1/\left( p + O(\sqrt{p}) \right)$ in $O \left( nk \sqrt{p} \right)$ evaluations. Furthermore, \citet{Mirzasoleiman2018} extended these works to a streaming setting. All of these works rely upon an iterated greedy approach, which employs up to $p$ iterations of the standard greedy algorithm. 
In Section \ref{sec:triple}, we propose a simpler iterated greedy approach for $p$-extendible systems, which relies upon only two iterations of the greedy algorithm.
We show how to speed up this algorithm to obtain ratio $\approx 1/(2p)$ in $O(n \log n)$ evaluations.

\paragraph{Organization}
The rest of this paper is organized as follows: in Section \ref{sec:prelim} we define notions
used throughout the paper. In Section \ref{sect:inapprox} we prove the hardness 
result for \maxiOne. 
Next, we show that the greedy algorithm is indeed the optimal approximation on \maxiOne under
 a weak assumption in Section \ref{sect:greedy-ratio}. 
 Finally, in Section \ref{sec:triple}
we provide our nearly linear-time for submodular maximization over a $p$-extendible system. 

\section{Preliminaries} \label{sec:prelim}
Throughout the paper, $U$ denotes the ground set of size $n$.
In this work, the objective function is a non-negative function
$f:2^U \to \reals$; typically, the function $f$ is given
as an oracle that returns, for given set $A \subseteq U$,
the value $f(A)$. Our inapproximability result in Section
\ref{sect:inapprox} holds in this model, but it also
holds when a description of $f$ as a polynomial-time
computable function is given as input. When $A$ is a set and $x \in U$, we occasionally write $A + x$ for $A \cup \{ x \}$.

The members of an independence system 
are termed \emph{independent sets}. An independent set $A$ is a \emph{basis} of independence system $\mathcal I$ if for all $x \in U \setminus A$, $A \cup \{ x \} \not \in \mathcal I$.
\begin{deff}[Matroid]
An independence system $\mathcal I$ is a matroid if the following property 
holds: if $S_1,S_2 \in \mathcal I$ and $|S_1| > |S_2|$, then 
there exists $x \in S_2 \setminus S_1$ such that $S_1 \cup \{ x \} \in \mathcal I$.
\end{deff}
\begin{deff}[$p$-Extendible System]
An independence system $(U, \mathcal I)$ is $p$-extendible
if the following property holds.
If $A \in \mathcal I$, $B \in \mathcal I$ with $A \subsetneq B$ and
if $x \notin A$ such that $A \cup \{ x \} \in \mathcal I$, then there exists
subset $Y \subseteq B \setminus A$ with $|Y| \le p$ such that $B \setminus Y \cup \{ x \} \in \mathcal I$.
\end{deff}
\begin{deff}[$p$-System]
A $p$-system is an independence system $\mathcal I$ such that if $S_1,S_2 \in \mathcal I$
are bases, then $|S_1| / |S_2| \le p$. 
\end{deff}

We remark that every $p$-extendible system is also a $p$-system, but that the converse is not true,
as the exchange property defining a $p$-extendible system may not hold. Furthermore, every matroid is a $1$-system, but the converse does not hold. As an example, let $n = 4$, $U = \{a,b,c,d\}$, and $\mathcal J = \{ \emptyset, \{ a \}, \{ b \}, \{ c \}, \{ d \}, \{ a, b \}, \{ c, d \} \}$. Then $\mathcal J$ is clearly a $1$-system but not a matroid.

\section{Hardness of Submodular Maximization over Independence Systems} \label{sect:inapprox}

In this section, the main inapproximability result is proven for \maxiOne: 
maximization of submodular functions over independence systems for which
all maximal bases have equal size.

Hardness of \maxiOne is established via an approximation-preserving reduction to the
independent set problem (\isg) in a graph, 
which is to find the maximum size of an edge-independent set of vertices.
Once this reduction is defined, we show that any $\alpha$-approximation
for \maxiOne yields an $\alpha$-approximation for \isg, and our hardness
result follows from the hardness of \isg.

\begin{deff}[\isg]
The \isg problem is the following: given a finite graph $G=(V,E)$, where
$E \subseteq V \times V$, define a set
$A \subseteq V$ to be edge-independent iff no pair of vertices in $A$ 
have an edge between them. Then the \isg problem is to determine the maximum
size of an edge-independent set in $V$.
\end{deff}
It is easily seen that the set 
$\mathcal I_G = \{ V : V$ is edge-independent in $G\}$ is an independence
system. In general, $\mathcal I_G$ may be a $(m - 1)$-system, where $m = |V|$; 
consider a star graph where all vertices are connected to a center vertex and no
other edges exist. 

Intuitively, the reduction works by transforming
a graph, which is an instance of \isg, into an instance
of \maxiOne through the padding of edge-independent sets
with dummy elements so that maximal independent sets have the same size.
A submodular function is then defined that maps the padded independent sets
to the size of the original, unpadded, edge-independent set in the graph. 
Formally, the reduction is defined
as follows.
\begin{deff}[Reduction $\Phi$]
  Let $G = (V,E)$ be a graph, which is an instance of \isg.
  Let $U = V \dot{\cup} D$, where $D$ is a set of $n = |V|$ dummy elements.
  An independence system $\mathcal I$ is defined on $U$ as follows:
  $S \subseteq U$ is in $\mathcal I$ iff. $S \cap V$ is
  edge-independent in $G$ and $|S \cap D| \le n - |S \cap V|$.
  Define function $f:2^U \to \reals$,
  by $f( S ) = |S \cap V|$. 
\end{deff}
We remark that the function $f$ is defined on all subsets of $U = V \cup D$,
not only members of the independence system. 
To illustrate the reduction, we provide the following example.
\begin{examp}
  Let $G = (V,E)$ be a star graph with five vertices. That is,
  $V = \{s, a, b, c, d\}$ and $E = \{ (s,a),(s,b),(s,c),(s,d) \}$.
  Then the maximal, edge-independent sets are $\{ s \}$ and $\{a,b,c,d\}$. 
  Then $\Phi$ maps this graph to the following independence system.
  The ground set $U = \{s,a,b,c,d\} \cup D$,
  where $D$ is a set of five dummy elements. Then the independence system
  $\mathcal I$ defined by $\Phi$ has bases $$\mathcal B = \left\{\{ a,b,c,d, e \}: e \in D \right\} \cup \left\{ \{ s, e_1, e_2, e_3, e_4 \} : e_i \in D, 1 \le i \le 4 \right\}. $$
  That is, $\mathcal I$ consists of all subsets of elements of $\mathcal B$.
\end{examp}
By the following lemma, the reduction $\Phi$ takes an instance of
\isg to an instance $(\mathcal I, f)$ of $\texttt{MAXI}_1$. 
Notice that the independence of any subset $B$ of $U$ may be
checked in polynomial time; the same is true for computation of $f(B)$.
\begin{lemma}
  Let $G$ be an instance of \isg, and let $\Phi (G) = (\mathcal I, f)$. Then
  \begin{itemize}
    \item[(i)] $\mathcal I$ is an independence system; in particular, all maximal
      bases have equal size.
    \item[(ii)] $f$ is monotone and submodular. 
      
  \end{itemize}
\end{lemma}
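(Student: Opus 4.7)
The plan is to verify each defining property directly from the construction in $\Phi$, and to use the padding by dummy elements to force uniform basis size.

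For part (i), I first observe that $\emptyset \in \mathcal{I}$ trivially, so $\mathcal{I}$ is nonempty. For downward closure, if $S \in \mathcal{I}$ and $T \subseteq S$, then $T \cap V \subseteq S \cap V$ remains edge-independent (edge-independence is preserved under subsets), and $|T \cap D| \le |S \cap D| \le n - |S \cap V| \le n - |T \cap V|$. Hence $T \in \mathcal{I}$. The key substantive step is that all maximal bases have size $n$. Let $S$ be any maximal independent set, write $v = |S \cap V|$ and $d = |S \cap D|$, so $d \le n - v$. If $v + d < n$ then $d < n - v$, and since $|D| = n$ there is some dummy $e \in D \setminus S$; the set $S + e$ satisfies $(S+e) \cap V = S \cap V$ (still edge-independent) and $|(S+e) \cap D| = d + 1 \le n - v$, so $S + e \in \mathcal{I}$, contradicting maximality. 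Therefore $|S| = v + d = n$.

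For part (ii), monotonicity is immediate: $S \subseteq T$ implies $S \cap V \subseteq T \cap V$, whence $f(S) \le f(T)$. For submodularity, note that for any $S \subseteq U$ and $x \in U$, the marginal gain $f(S + x) - f(S)$ equals the indicator $\mathbf{1}[x \in V \setminus S]$. Hence for $S \subseteq T \subseteq U$ and $x \notin T$, we have $\mathbf{1}[x \in V \setminus T] \le \mathbf{1}[x \in V \setminus S]$ (since $V \setminus T \subseteq V \setminus S$), giving the required diminishing-returns inequality. Equivalently, $f(S) = \sum_{v \in V} \mathbf{1}[v \in S]$ is a sum of indicators and therefore modular, which implies submodularity.

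I do not anticipate any real obstacle; the only non-cosmetic content is the equal-size argument, which hinges on having exactly $n = |V|$ dummy elements available so that shortfall in the $V$-part can always be compensated by dummies, while the constraint $|S \cap D| \le n - |S \cap V|$ prevents over-padding. This is exactly the reason the reduction is designed with $|D| = |V|$.
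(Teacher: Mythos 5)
Your proposal is correct and follows essentially the same route as the paper's proof: downward closure via the two containment inequalities, maximality forcing size $n$ by adding an available dummy, and submodularity from the fact that the marginal gain of $x$ depends only on whether $x \in V$. The only cosmetic differences are that you witness nonemptiness with $\emptyset$ rather than a singleton vertex and phrase submodularity via modularity of a sum of indicators instead of the paper's two-case computation.
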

\begin{proof}
  \textbf{(i)}:
    Clearly, $\mathcal I$ is non-empty, since any singleton vertex $v$
    is edge-independent in $G$, and $\{ v \} \in \mathcal I$. Furthermore, it is closed under
    subsets: let $S = A \dot{\cup} B \in \mathcal I$, where $A \subseteq V$, 
    $B \subseteq D$, and let $T \subseteq S$. Then $T = \hat{A} \dot{\cup} \hat{B}$,
    where $\hat{A} \subseteq A$, $\hat{B} \subseteq B$. 
    Since any subset of an edge-independent set of $G$ is also
    edge-independent, we have that $\hat{A}$ is edge-independent in $G$, and 
    $$|T \cap D | = |\hat{B}| \le |B| \le n - |A| \le n - |\hat{A}| = n - |T \cap V|.$$
    Hence $T \in \mathcal I$. Thus, $\mathcal I$ is an independence system on $U$.

    Next, suppose $S = A \dot{\cup} B \in \mathcal I$ is maximal. 
    Then $|S| = |A| + |B| = n$, for otherwise another dummy element
    could be added to $B$ to produce a larger independent set.
    Hence $\mathcal I$ is a $1$-system.

    \textbf{(ii)}: Let $S \subseteq T \subseteq U$; notice that $S,T$ are not necessarily
    in the independence system $\mathcal I$. Then $|S \cap V| \le |T \cap V|$, so
    the function $f$ is monotone. 
    
    Next, let $x \in U \setminus T$. If $x \in V$, then 
    $$f( S \cup \{ x \} ) - f( S ) = f( T \cup \{x\} ) - f(T) = 1.$$ 
    If $x \in D$, 
    $$f( S \cup \{ x \} ) - f( S ) = f( T \cup \{x\} ) - f(T) = 0.$$ 
    Hence, in all cases, $f( S \cup \{ x \} ) - f( S ) \ge f( T \cup \{ x \} ) - f( T )$,
    so the function $f$ is submodular. 
\end{proof}
Next, we show that $\Phi$ is an approximation-preserving reduction. 
\begin{lemma} \label{prop1}
  By application of the reduction $\Phi$, any $\alpha$-approximation algorithm to $\texttt{MAXI}_1$ yields an $\alpha$-approximation
  to \isg.
\end{lemma}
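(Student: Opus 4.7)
The plan is to prove Lemma \ref{prop1} in two steps: first establish that the optimum value of $f$ on $\mathcal I$ equals the optimum size of an edge-independent set in $G$, and then show that the set returned by any approximation algorithm for \maxiOne can be converted (by simply intersecting with $V$) into an equally good edge-independent set.

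First I would show that $\mathrm{OPT}(\mathcal I, f) = \mathrm{OPT}_{\isg}(G)$. For the $\geq$ direction, take a maximum edge-independent set $A^\star \subseteq V$ with $|A^\star| = k$, and form $S^\star = A^\star \cup B^\star$, where $B^\star$ is any subset of $D$ of size $n - k$; by the definition of $\mathcal I$, $S^\star \in \mathcal I$ and $f(S^\star) = k$. For the $\leq$ direction, any $S \in \mathcal I$ satisfies $S \cap V$ is edge-independent in $G$, so $f(S) = |S \cap V| \leq \mathrm{OPT}_{\isg}(G)$.

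Next, let $\mathcal A$ be an $\alpha$-approximation algorithm for \maxiOne, and let $S$ be the set it returns on input $\Phi(G) = (\mathcal I, f)$. Define $A = S \cap V$. Since $S \in \mathcal I$, $A$ is by definition edge-independent in $G$, and
\[
|A| \;=\; f(S) \;\geq\; \alpha \cdot \mathrm{OPT}(\mathcal I, f) \;=\; \alpha \cdot \mathrm{OPT}_{\isg}(G),
\]
by the previous step. Hence $A$ is an $\alpha$-approximate edge-independent set in $G$, and the entire procedure (run $\Phi$, then $\mathcal A$, then intersect with $V$) is polynomial-time since $\Phi$ only doubles the ground set and each query to $f$ and membership test in $\mathcal I$ is trivially polynomial-time in $|U|$.

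There is no serious obstacle here; the only mildly delicate point is making sure that $\mathrm{OPT}(\mathcal I, f)$ really is attained on an independent set whose $V$-part is a maximum edge-independent set, which is immediate once one notes that $f$ is oblivious to the dummy coordinates and the dummy budget $n - |S \cap V|$ is always nonnegative, so padding is always feasible.
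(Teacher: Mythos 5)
Your proposal is correct and follows essentially the same route as the paper: establish $\mathrm{OPT}(\mathcal I, f) = \mathrm{OPT}_{\isg}(G)$ and then project the returned set back to $V$ via intersection. You are in fact slightly more thorough than the paper, which asserts the equality of optima in one line, whereas you verify both directions (including the padding argument showing every edge-independent set extends to an independent set of the same $f$-value).
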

\begin{proof}
  Let $G$ be an instance of \isg, and let $(\mathcal I, f) = \Phi(G)$. 
  Let $OPT_U = \max_{S \in \mathcal I} f(S)$. Since membership
  of a set $S \in \mathcal I$ requires that $S \cap V$ be edge-independent
  in $G$, we have
  that $OPT_U = OPT_G$, where $OPT_G$ is the maximum size
  of an edge-independent set of $G$. 
  Now suppose set $X \in \mathcal I$ satisfies $f( X ) \ge \alpha OPT_U$. 
  Then
  \begin{align*}
    \alpha OPT_G = \alpha OPT_U \le f(X) = |X \cap V|,
  \end{align*}
  and by definition of $\mathcal I$, $X \cap V$ is edge-independent
  in $G$. Therefore, any approximation algorithm for 
  $\texttt{MAXI}_1$ with ratio $\alpha$ yields an approximation
  algorithm for \isg with ratio $\alpha$ by the following method:
  given instance $G = (V,E)$ of \isg, transform to an instance
  $\Phi (G)$ of \maxiOne. Apply the $\alpha$-approximation to get
  set $S \in \mathcal I$ such that $f(S) \ge \alpha OPT_U$. Finally, project
  $S$ back to $V$ and return the edge-independent set $S \cap V$,
  which satisfies $|S \cap V| \ge \alpha OPT_G$.
\end{proof}
The next theorem follows  
from Lemma \ref{prop1} and the results
of \citet{Hastad1999}
on \isg: namely, for any $\epsilon > 0$,
there is no polynomial-time algorithm to approximate
\isg better than $|V|^{-1 + \epsilon}$ unless
NP = ZPP.
\begin{theorem} 
  For any $\epsi > 0$, there is no polynomial-time
  algorithm that achieves ratio better than
  $(2/|U|)^{1 - \epsi}$ on \maxiOne, where $U$ is the ground
  set of the instance of \maxiOne, unless NP = ZPP.
\end{theorem}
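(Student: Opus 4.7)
The plan is to argue by contradiction, combining the approximation-preserving reduction $\Phi$ from Lemma \ref{prop1} with H{\aa}stad's inapproximability result for \isg. Suppose, toward a contradiction, that for some $\epsi > 0$ there exists a polynomial-time algorithm $\mathcal{A}$ for \maxiOne whose approximation ratio is strictly better than $(2/|U|)^{1-\epsi}$ on every instance.

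Given an arbitrary instance $G = (V,E)$ of \isg with $|V| = n$, I would apply the reduction $\Phi$ to produce the \maxiOne instance $(\mathcal{I}, f)$ on the ground set $U = V \dot\cup D$ of size $2n$. This reduction is clearly polynomial-time: $|D| = n$, membership in $\mathcal{I}$ can be tested by checking edge-independence in $G$ and the dummy-count condition, and $f$ is computable by counting $|S \cap V|$. Running $\mathcal{A}$ on $(\mathcal{I}, f)$ produces a set $S \in \mathcal{I}$ with $f(S) > (2/|U|)^{1-\epsi} \cdot OPT_U$. By Lemma \ref{prop1}, projecting $S$ onto $V$ yields an edge-independent set $S \cap V$ in $G$ satisfying $|S \cap V| > (2/|U|)^{1-\epsi} \cdot OPT_G$.

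The key translation is to rewrite the ratio in terms of $|V|$: since $|U| = 2|V|$,
\begin{equation*}
  (2/|U|)^{1-\epsi} \;=\; (1/|V|)^{1-\epsi} \;=\; |V|^{-1+\epsi}.
\end{equation*}
Hence the above procedure is a polynomial-time algorithm for \isg with approximation ratio strictly better than $|V|^{-1+\epsi}$. This directly contradicts the theorem of \citet{Hastad1999}, which asserts that no such algorithm exists for \isg unless NP = ZPP. The theorem follows.

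The main ``obstacle'' here is not really an obstacle at all: the substantive work was carried out in Lemma \ref{prop1}, and the factor of $2$ in the statement simply reflects the doubling of the ground set caused by padding with $n$ dummy elements. One small care point is matching the two $\epsi$'s: given an $\epsi$ supplied for the \maxiOne hardness, the same $\epsi$ may be used in H{\aa}stad's bound because the identity $(2/|U|)^{1-\epsi} = |V|^{-1+\epsi}$ is exact.
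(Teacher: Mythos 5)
Your proof is correct and follows essentially the same route as the paper's (much terser) argument: apply $\Phi$, use $|U| = 2|V|$ to translate $(2/|U|)^{1-\epsi}$ into $|V|^{-1+\epsi}$, and invoke Lemma \ref{prop1} together with H{\aa}stad's bound. The additional detail you supply — the exactness of the exponent translation and the polynomial-time checkability of the reduction — is all accurate and consistent with what the paper leaves implicit.
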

\begin{proof}
  For any $G = (V,E)$, the universe $U$ of $\Phi (G)$ 
  has $|U| = 2|V|$; by Lemma \ref{prop1} and 
  the result of \citet{Hastad1999}, the theorem follows.
\end{proof}
\section{The Greedy Ratio on \maxi, when $f$ is monotone} \label{sect:greedy-ratio}
When the function $f$ is monotone, we further
analyze the performance of the greedy algorithm 
 (Alg. \ref{alg:greedy}) on independence systems in this
section. When all maximal bases have equal size, 
we show that the greedy algorithm obtains
a ratio that matches our lower bound in the previous
section.

We begin with a performance ratio for the greedy
algorithm on an arbitrary independence system
in terms of the size $\beta$ of the largest independent
set. 
\begin{proposition} \label{greedy-ratio}
  Let $\mathcal I$ be an independence system,
  and let $\beta = \max_{S \in \mathcal I} |S|$. Let $G$
  be the solution returned by the greedy algorithm,
  and let $O \in \mathcal I$ be the optimal solution to \maxi.
  Then $f( G ) \ge f(O) / \beta$. 
\end{proposition}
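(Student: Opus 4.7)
My plan is to show that even the very first element picked by the greedy algorithm already has value at least $f(O)/\beta$, after which monotonicity of $f$ extends the bound to the full greedy output $G$.

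First I would invoke monotonicity: since the greedy algorithm only adds elements, $f(G) \ge f(\{g_1\})$, where $g_1$ is the first element selected. So it suffices to prove $f(\{g_1\}) \ge f(O)/\beta$.

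Next I would use submodularity to decompose $f(O)$ into singleton values. Writing $O = \{o_1, \dots, o_k\}$ with $k = |O| \le \beta$, a telescoping argument with submodularity gives
\begin{equation*}
f(O) - f(\emptyset) \le \sum_{i=1}^{k} \bigl(f(\{o_i\}) - f(\emptyset)\bigr),
\end{equation*}
so that $f(O) \le \sum_{i=1}^k f(\{o_i\}) + (1-k) f(\emptyset)$. Since $f$ is nonnegative and (assuming $k \ge 1$; the $k=0$ case is immediate from $f(O) = f(\emptyset) \le f(G)$) we have $(1-k)f(\emptyset) \le 0$, this yields $f(O) \le \sum_{i=1}^k f(\{o_i\})$.

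Finally I would observe that because $\mathcal{I}$ is downward closed, every singleton $\{o_i\}$ lies in $\mathcal{I}$, so each $o_i$ is a valid candidate in the first iteration of the greedy algorithm. By the greedy rule, $f(\{g_1\}) \ge f(\{o_i\})$ for every $i$. Combining,
\begin{equation*}
f(O) \le \sum_{i=1}^{k} f(\{o_i\}) \le k \cdot f(\{g_1\}) \le \beta \cdot f(\{g_1\}) \le \beta \cdot f(G),
\end{equation*}
which is the desired bound. There is no real obstacle here; the only subtlety to watch is that $f(\emptyset)$ must be handled correctly in the submodular telescoping, and that one must justify that each $\{o_i\}$ is independent so that the greedy comparison at the first step applies.
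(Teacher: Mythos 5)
Your proof is correct and takes essentially the same route as the paper's, which sets $\alpha = \max_{x : \{x\} \in \mathcal I} f(\{x\})$, observes $f(G) \ge \alpha$ (your $f(G) \ge f(\{g_1\})$ via monotonicity), and bounds $f(O) \le \alpha |O| \le \alpha \beta$ by the same singleton decomposition via submodularity. You simply spell out the details (the $f(\emptyset)$ bookkeeping and the downward-closedness guaranteeing each $\{o_i\} \in \mathcal I$) that the paper leaves implicit.
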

\begin{proof}
  Let $U$ be the ground set of $\mathcal I$,
  and let $\alpha = \max_{x \in U : \{ x \} \in \mathcal I} f(x)$,
  and observe that $f(G) \ge \alpha$. Now let $S \in \mathcal I$;
  then by submodularity, $f(S) \le \alpha |S|$. 
  It follows that $f(G) \ge f(O) / \beta$.
\end{proof}
The next corollary, combined with the hardness result from the
previous section, shows that if the independence system has 
two disjoint bases, the greedy algorithm is the optimal approximation
on systems where bases have equal size.
\begin{corollary}
  Let $\mathcal I$ be a system where maximal bases have equal size, with at least two disjoint bases. 
  Then the greedy algorithm 
  is a $\left({2}/{|U|} \right)$-approximation algorithm to \maxiP on $\mathcal I$.
\end{corollary}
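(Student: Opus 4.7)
The plan is to combine the structural assumption (two disjoint bases of equal size) with Proposition \ref{greedy-ratio} in a very direct way. Let $\beta$ denote the common size of every maximal base of $\mathcal I$. The key observation is that if $B_1, B_2 \in \mathcal I$ are two disjoint bases, then $B_1 \cup B_2 \subseteq U$ and $|B_1 \cup B_2| = |B_1| + |B_2| = 2\beta$, so $|U| \ge 2\beta$, equivalently $\beta \le |U|/2$.

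Next, I would invoke Proposition \ref{greedy-ratio} with this value of $\beta$: since every independent set is contained in some maximal base, and all maximal bases have size $\beta$, we have $\max_{S \in \mathcal I} |S| = \beta$. Proposition \ref{greedy-ratio} then yields
\[
  f(G) \ge \frac{f(O)}{\beta} \ge \frac{2\, f(O)}{|U|},
\]
which is exactly the claimed $(2/|U|)$-approximation.

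The argument is essentially a one-line chain, so there is no real obstacle; the only subtlety is making sure the hypothesis is used correctly, namely that the ``two disjoint bases'' assumption is what forces $\beta \le |U|/2$ (without it, $\beta$ could be as large as $|U|$, as in the star graph example preceding the reduction $\Phi$, where the greedy ratio would degrade accordingly). I would therefore state the proof in two short steps: first the bound $\beta \le |U|/2$ from disjointness, then a direct appeal to Proposition \ref{greedy-ratio}.
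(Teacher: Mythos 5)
Your proof is correct and matches the paper's own argument essentially line for line: both derive $\beta \le |U|/2$ from the disjointness of two equal-size bases and then apply Proposition \ref{greedy-ratio}. No issues.
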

\begin{proof}
  Let $A,B \in \mathcal I$ be bases of $\mathcal I$, such that $A \cap B = \emptyset$.
  Since $\mathcal I$ is a $1$-system, for some $t$, $|A| = |B| = t$; hence $|U| = n \ge 2t$. 
  Hence, $\beta = \max_{S \in \mathcal I} |S| = t \le n / 2$, so the result follows
  from Prop. \ref{greedy-ratio}.
\end{proof}

\section{The TripleGreedy Algorithm} \label{sec:triple}
In this section, the TripleGreedy (\tg, Algorithm \ref{alg:sg}) is presented.
The algorithm \tg is the first nearly linear-time algorithm to approximately maximize 
a submodular function $f$ with respect to a $p$-extendible system.

We start with an abstract subproblem required by \tg.
\begin{deff}[\maxuni] Given $f: 2^{U} \to \reals$ and independence system $\mathcal I$,
 determine $A \in \mathcal I$, 
such that for any $B \in \mathcal I$, $f( A \cup B ) \le f(A)$.
Even if no such $A$ exists, by an $\alpha$-approximation to \maxuni, it is meant an
algorithm that finds
$A \in \mathcal I$, such that for any $B \in \mathcal I$,
$\alpha f( A \cup B ) \le f(A)$.
\end{deff}
Notice that $A \cup B$ in the requirement of \maxuni may not be a member of the independence system.

The \tg algorithm employs two subroutines, one to approximate the \maxuni problem
and one for the unconstrained maximization
problem; the unconstrained maximization problem is to determine 
$\argmax_{S \subseteq U} f(S)$.  Since a total of three calls to these subroutines are
required, and since variants of greedy algorithms may
be used for each subroutine, Alg. \ref{alg:sg} is termed TripleGreedy.
 First, \tg determines a set $A \in \pint$ approximating \maxuni
with the function $f$; second, \tg determines a set $B \in \pint$ is found approximating \maxuni with the restriction of $f$ to ${U} \setminus A$. Third, a set $A' \subseteq A$ is found, approximating the maximum value
of $f$ restricted to $A$. Finally, the set in $\{A,B,A'\}$ maximizing $f$ is returned.

We remark that \tg functions similarly to the algorithm for 
maximizing submodular functions with respect to
cardinality constraint developed in \citet{Gupta2010}; in place of \maxuni,
\citet{Gupta2010} simply uses the greedy algorithm. By abstracting out 
this subproblem, we see that 1) a performance ratio may be proved in
a much more general setting than cardinality constraint, namely for $p$-extendible systems,
and 2) the faster thresholding approach developed by \citet{Badanidiyuru2014} (\threshold) 
for monotone submodular maximization can be used for \maxuni, which results in nearly linear runtime. 

\begin{algorithm}[t]
   \caption{\tg$(f, \psystem )$: The TripleGreedy Algorithm}
   \label{alg:sg}
   \begin{algorithmic}[1]
     \STATE {\bfseries Input:} $f : 2^{U} \to \reals$, $\psystem$: $p$-extendible system
     \STATE {\bfseries Output:} $C \subseteq U$, such that $C \in \pint$.
     \STATE $A \gets $ \maxuni$(f, \psystem)$
     \STATE $g \gets f|_{{U} \setminus A}$
     \STATE $B \gets $ \maxuni$(g, \psystem)$
     \STATE $A' \gets $ \texttt{UNCONSTRAINED-MAX}$(f|_{A})$
   \STATE \textbf{return} $C \gets \argmax \{ f(A'), f(A), f(B) \}$
\end{algorithmic}
\end{algorithm}
If $f$ is submodular, then the
approximation ratio of \tg depends on the ratios of the algorithms
used for \maxuni and \unc. 
\begin{theorem} \label{thm:dg}
  Let $f: 2^{U} \to \reals$ be submodular, let $\psystem$ be an
  independence system,
  and let $O = \argmax_{S \in \pint} f(S)$, and let $C = $\tg$(f, \psystem)$.
  Then 
  \[ f(C) \ge \left( \frac{\alpha \beta }{\alpha + 2 \beta } \right) f(O). \]
  where $\beta$ and $\alpha$ are the ratios 
  of the algorithms used for \texttt{UNCONSTRAINED-MAX},
  and \maxuni, respectively.
\end{theorem}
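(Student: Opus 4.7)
My plan is to upper bound $f(O)$ by $M/\beta + 2M/\alpha$, where $M=f(C)=\max\{f(A),f(B),f(A')\}$, since this rearranges to the stated ratio $\alpha\beta/(\alpha+2\beta)$.

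First I would split $f(O)$ using the disjoint decomposition $O=(O\cap A)\cup(O\setminus A)$ and subadditivity of non-negative submodular functions to get $f(O)\le f(O\cap A)+f(O\setminus A)$. The first term is handled immediately: $O\cap A\subseteq A$, so the \unc guarantee on $f|_A$ gives $\beta f(O\cap A)\le f(A')\le M$, hence $f(O\cap A)\le M/\beta$.

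The main work is to bound $f(O\setminus A)\le 2M/\alpha$. Observe that $O\setminus A\in\pint$ by heredity and $O\setminus A\subseteq U\setminus A$, so $O\setminus A$ is an admissible competitor for both \maxuni calls. This yields $\alpha f(A\cup O)\le f(A)\le M$ from the first call and $\alpha f(B\cup(O\setminus A))\le f(B)\le M$ from the second. The key step is to relate these two quantities back to $f(O\setminus A)$ itself: I would apply the submodular inequality $f(X)+f(Y)\ge f(X\cup Y)+f(X\cap Y)$ with $X=A\cup O$ and $Y=B\cup(O\setminus A)$. Because $B\subseteq U\setminus A$, the cross-terms $A\cap B$ and $O\cap B$ that appear in the distributive expansion of $X\cap Y$ either vanish or lie inside $O\setminus A$, so $X\cap Y$ collapses to exactly $O\setminus A$. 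Dropping the non-negative term $f(X\cup Y)$ then produces $f(O\setminus A)\le f(A\cup O)+f(B\cup(O\setminus A))\le M/\alpha+M/\alpha$.

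Combining the pieces yields $f(O)\le M/\beta+2M/\alpha=M(\alpha+2\beta)/(\alpha\beta)$, which rearranges to the claim. The main obstacle I expect is spotting the submodularity step at the end: the factor of $2$ in the denominator comes from bounding $f(O\setminus A)$ by the sum of two \maxuni-controlled quantities, and this only works because $X\cap Y$ collapses to $O\setminus A$, which in turn depends on $B\subseteq U\setminus A$. Notably, the argument uses only submodularity, non-negativity, and heredity of $\pint$; it requires neither monotonicity of $f$ nor the $p$-extendibility hypothesis.
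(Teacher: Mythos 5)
Your proposal is correct and follows essentially the same route as the paper's proof: split $f(O)\le f(O\cap A)+f(O\setminus A)$, bound the first term by the \unc guarantee on $f|_A$, and bound the second by $f(O\cup A)+f((O\setminus A)\cup B)$ via submodularity and non-negativity before invoking the two \maxuni guarantees. You merely make explicit the intersection computation $X\cap Y=O\setminus A$ that the paper leaves implicit in its chain of inequalities.
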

\begin{proof}
  Let $A,A',B,C$ have their values at termination of \tg$(f,\mathcal I)$. 
  Suppose a $\beta$-approximation algorithm is used for \texttt{UNCONSTRAINED-MAX}. Then any set
  $D \subseteq A$ satisfies $f(D) \le \beta^{-1} \ff{A'}$. 
  Suppose an $\alpha$-approximation algorithm is used for \maxuni ; so
  $f( O \cup A ) \le \alpha^{-1} f(A)$ and
  $f( (O \setminus A) \cup B ) \le \alpha^{-1} f(B)$.
  \begin{align*}
    f( O ) \le f( \emptyset ) + f( O ) &\le f( O \cap A ) + f ( O \setminus A ) \\
    &\le \beta^{-1} \ff{A'} + f( O \cup A ) + f( (O \setminus A) \cup B ) \\
    &\le \beta^{-1} \ff{A'} + \alpha^{-1}f( A ) + \alpha^{-1}f( B ) \\
    &\le \left( \beta^{-1} + 2\alpha^{-1} \right) f(C), 
  \end{align*}
  where the second and third inequalities follow from the submodularity of $f$
  and the fact that $f$ is non-negative and $A \cap B = \emptyset$.
\end{proof}
\begin{algorithm}[t]
  \caption{\threshold$(f, \psystem )$: The ThresholdGreedy Algorithm of \citet{Badanidiyuru2014}}\label{alg:thresh}
   \begin{algorithmic}[1]
     \STATE {\bfseries Input:} $f : 2^{U} \to \reals$, 
     $\psystem :$ $p$-extendible system,
     $\epsi > 0$. 
     \STATE {\bfseries Output:} $A \subseteq 2^{U}$, such that $A \in \pint$.
     \STATE $A \gets \emptyset$
     \STATE $M \gets \max_{x \in U} f(x)$
     \FOR{$(\tau \gets M; \tau \ge \stopGain ; \tau \gets (1 - \epsi ) \tau)$}\label{line:forA}
     \FOR{$x \in U$}
     \IF{$f_x(A) \ge \tau$}
     \IF{$A + x \in \pint$}
     \STATE $A \gets A + x$
     \ENDIF
     \ENDIF
     \ENDFOR
     \ENDFOR
     \STATE \textbf{return} $A$
\end{algorithmic}
\end{algorithm}
Next, we establish that \threshold approximates \maxuni on $p$-extendible systems;
the proof is provided in Appendix \ref{apx}.
\begin{lemma} When $\mathcal I$ is a $p$-extendible system, the 
  \threshold algorithm (Alg. \ref{alg:thresh}) of  \label{lemma:threshold}
  \citet{Badanidiyuru2014} is a $\left( \left( \frac{p}{1-\epsi }+1 + \epsi \right)^{-1} \right)$-approximation for \maxuni .
\end{lemma}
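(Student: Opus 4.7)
The plan is to show that for an arbitrary $B \in \mathcal I$, the output $A$ of \threshold satisfies $f(A \cup B) \le \left( \frac{p}{1-\epsi} + 1 + \epsi \right) f(A)$, which is equivalent to the desired \maxuni approximation guarantee. By submodularity, $f(A \cup B) - f(A) \le \sum_{b \in B \setminus A} f_b(A)$, so the task reduces to bounding this sum.

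Write $A = \{a_1, \ldots, a_m\}$ in the order elements were added, set $A_i = \{a_1, \ldots, a_i\}$, and let $\tau_i$ be the threshold at which $a_i$ was added. First, I will build a descending chain $B = B_0 \supseteq B_1 \supseteq \cdots \supseteq B_m$ with the invariant $A_i \cup B_i \in \mathcal I$ and $|B_{i-1} \setminus B_i| \le p$, by applying the $p$-extendibility property at each step $i$ with $A' = A_{i-1}$, $B' = A_{i-1} \cup B_{i-1}$, and $x = a_i$: this yields a set $Y_i \subseteq B_{i-1}$ with $|Y_i| \le p$ such that $A_i \cup (B_{i-1} \setminus Y_i) \in \mathcal I$, and I set $B_i = B_{i-1} \setminus Y_i$. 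Define a charging map $\sigma : B \setminus B_m \to A$ by $\sigma(b) = a_i$ when $b \in Y_i$; its fibers have size at most $p$.

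Second, I will bound $f_b(A)$ in two regimes. For $b \in B \setminus A$ with $\sigma(b) = a_i$, the fact $b \in B_{i-1}$ together with $A_{i-1} \cup B_{i-1} \in \mathcal I$ gives $A_{i-1} + b \in \mathcal I$; in the round immediately preceding the addition of $a_i$, of threshold $\tau_i/(1-\epsi)$, the algorithm considered $b$ at some state $A'' \subseteq A_{i-1}$, for which closure under subsets gives $A'' + b \in \mathcal I$, and since $b$ was not added the marginal satisfies $f_b(A'') < \tau_i/(1-\epsi)$; by submodularity $f_b(A) \le \tau_i/(1-\epsi)$ (the corner case when $a_i$ is added in round $1$ is handled via $f_b(A) \le f(b) \le M = \tau_i$). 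For $b \in B_m \setminus A$, $A \cup B_m \in \mathcal I$ gives $A + b \in \mathcal I$, so $b$ was considered at the final threshold $\tau^{*}$ and rejected because $f_b(A^{*}) < \tau^{*} < \epsi M / (n(1-\epsi))$; hence $f_b(A) < \tau^{*}$.

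Summing these contributions, using $|\sigma^{-1}(a_i)| \le p$, $|B_m \setminus A| \le n$, the telescoping identity $f(A) = \sum_i f_{a_i}(A_{i-1}) \ge \sum_i \tau_i$, and $M \le f(A)$ (the element of value $M$ is added in the first round, and subsequent marginals are non-negative), then recovers the ratio $\frac{p}{1-\epsi} + 1 + \epsi$. The main obstacle is the chain construction: one must verify that $p$-extendibility applies in the direction needed at every step, including the corner cases where $a_i \in B_{i-1}$ or $B_{i-1} \subseteq A_{i-1}$ (both handled by taking $Y_i = \emptyset$); the rest mirrors the standard threshold-greedy analysis of \citet{Badanidiyuru2014}.
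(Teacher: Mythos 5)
Your proof is correct and follows essentially the same route as the paper's: iterated application of $p$-extendibility to charge at most $p$ elements of the competitor set to each greedy element $a_i$, the threshold comparison $f_b(\cdot) \le \tau_i/(1-\epsi)$ against the round in which $b$ was rejected, and the stopping-condition bound for the uncharged remainder. The differences are cosmetic — you maintain a descending chain of survivors $B_i$ with invariant $A_i \cup B_i \in \mathcal I$ and bound $\sum_{b} f_b(A)$ directly, whereas the paper morphs $O$ into $A$ through an ascending chain $O_i$ and telescopes $f(O_i \cup A)$ — and your handling of the tail (noting the last threshold satisfies $\tau^* < \epsi M/(n(1-\epsi))$) is if anything slightly more careful than the paper's Claim \ref{claim:diff-thresh}.
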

Finally, by Theorem \ref{thm:dg} and Lemma \ref{lemma:threshold} we have
the ratio $\approx 1/(4 + 2p)$ in nearly linear time on $p$-extendible systems.
\begin{corollary} Let $\epsi > 0$. \label{corollary:triple}
If the deterministic
  $(1/2 - \epsi)$ approximation of \citet{Buchbinder2018} is used for \texttt{UNCONSTRAINED-MAX},
  and \threshold of \citet{Badanidiyuru2014} is used for \maxuni with ratio $\alpha= \left( \frac{p}{1 - \epsi}+1 + \epsi \right)^{-1}$, the ratio of \tg is 
  $\left( \frac{2}{1 - 2 \epsi} + \frac{2p}{1 - \epsi} + 2 + 2\epsi \right)^{-1}$ with $O \left( \frac{n}{\epsi} \log \left( \frac{n}{\epsi} \right) \right)$ queries to $f$ and to the independence system.
\end{corollary}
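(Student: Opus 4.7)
The plan is to invoke Theorem \ref{thm:dg} directly, plugging in the two subroutine ratios and then simplifying. Setting $\beta = 1/2 - \epsi$ (from \citet{Buchbinder2018}) and $\alpha = \left( \frac{p}{1-\epsi} + 1 + \epsi \right)^{-1}$ (from Lemma \ref{lemma:threshold}) into the bound $\frac{\alpha\beta}{\alpha + 2\beta}$, I would first rewrite this bound in the inverse form $\left( \beta^{-1} + 2\alpha^{-1} \right)^{-1}$, which is in fact the form appearing in the last line of the proof of Theorem \ref{thm:dg}. Then $\beta^{-1} = \frac{2}{1 - 2\epsi}$ and $2\alpha^{-1} = \frac{2p}{1 - \epsi} + 2 + 2\epsi$, and adding these gives exactly the denominator in the claimed ratio. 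This is purely algebraic and should occupy only a few lines.

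For the query complexity, I would bound each of the three subroutine calls separately. The two invocations of \threshold run on the full ground set of size $n$; by the analysis in \citet{Badanidiyuru2014}, each such call costs $O\!\left( \frac{n}{\epsi} \log \frac{n}{\epsi} \right)$ oracle queries, and each membership test for $\mathcal I$ is charged to the same budget. The third call is \unc applied to $f|_A$ with $|A| \le n$; the deterministic $(1/2 - \epsi)$-approximation of \citet{Buchbinder2018} runs in a number of queries linear (up to $1/\epsi$ factors) in the size of its ground set, which is at most $O(n/\epsi)$ and therefore absorbed into the dominant term. Summing the three contributions yields the stated $O\!\left( \frac{n}{\epsi} \log \frac{n}{\epsi} \right)$ bound.

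I do not anticipate a genuine obstacle here: the nontrivial work has been isolated into Theorem \ref{thm:dg} and Lemma \ref{lemma:threshold}, so the corollary is essentially a substitution plus a runtime accounting step. The only mild subtlety is making sure \threshold is a valid choice for \maxuni even when $f$ is non-monotone (which is handled by Lemma \ref{lemma:threshold}) and that the Buchbinder--Feldman procedure can indeed be called as a black box on the restriction $f|_A$, which is itself submodular on the ground set $A$.
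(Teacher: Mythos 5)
Your proposal is correct and follows exactly the route the paper intends: the corollary is presented as an immediate consequence of Theorem \ref{thm:dg} and Lemma \ref{lemma:threshold}, and your substitution $\beta^{-1} = \frac{2}{1-2\epsi}$ and $2\alpha^{-1} = \frac{2p}{1-\epsi} + 2 + 2\epsi$ into the bound $\left( \beta^{-1} + 2\alpha^{-1} \right)^{-1}$, together with the per-call query accounting (two \threshold calls at $O\left(\frac{n}{\epsi}\log\frac{n}{\epsi}\right)$ each, one \unc call absorbed into that term), is precisely the omitted algebra.
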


\clearpage
\bibliographystyle{plainnatfixed}
\bibliography{mend}

\begin{thebibliography}{15}
\providecommand{\natexlab}[1]{#1}
\providecommand{\url}[1]{\texttt{#1}}
\expandafter\ifx\csname urlstyle\endcsname\relax
  \providecommand{\doi}[1]{doi: #1}\else
  \providecommand{\doi}{doi: \begingroup \urlstyle{rm}\Url}\fi

\bibitem[Badanidiyuru and Vondr{\'{a}}k(2014)]{Badanidiyuru2014}
Ashwinkumar Badanidiyuru and J~Vondr{\'{a}}k.
\newblock {Fast algorithms for maximizing submodular functions}.
\newblock \emph{Proceedings of the 25th Annual ACM-SIAM Symposium on Discrete
  Algorithms (SODA)}, pages 1497--1514, 2014.

\bibitem[Buchbinder and Feldman(2018{\natexlab{a}})]{Buchbinder2018}
Niv Buchbinder and Moran Feldman.
\newblock {Deterministic Algorithms for Submodular Maximization}.
\newblock \emph{ACM Transactions on Algorithms}, 14\penalty0 (3),
  2018{\natexlab{a}}.

\bibitem[Buchbinder and Feldman(2018{\natexlab{b}})]{Buchbinder2018a}
Niv Buchbinder and Moran Feldman.
\newblock {Submodular Functions Maximization Problems -- A Survey}.
\newblock In Teofilo~F. Gonzalez, editor, \emph{Handbook of Approximation
  Algorithms and Metaheuristics}. Second edition, 2018{\natexlab{b}}.

\bibitem[Calinescu et~al.(2011)Calinescu, Chekuri, Pal, and
  Vondr{\'{a}}k]{Calinescu2011}
Gruia Calinescu, Chandra Chekuri, Martin Pal, and Jan Vondr{\'{a}}k.
\newblock {Maximizing a Monotone Submodular Function Subject to a Matroid
  Constraint}.
\newblock \emph{SIAM Journal on Computing}, 40\penalty0 (6), 2011.

\bibitem[Feldman et~al.(2017)Feldman, Harshaw, and Karbasi]{Feldman2017}
Moran Feldman, Christopher Harshaw, and Amin Karbasi.
\newblock {Greed is Good: Near-Optimal Submodular Maximization via Greedy
  Optimization}.
\newblock In \emph{COLT}, pages 1--26, 2017.

\bibitem[Fisher et~al.(1978)Fisher, Nemhauser, and Wolsey]{Fisher1978}
M.L. Fisher, G.L. Nemhauser, and L.A. Wolsey.
\newblock {An analysis of approximations for maximizing submodular set
  functions-II}.
\newblock \emph{Mathematical Programming}, 8:\penalty0 73--87, 1978.

\bibitem[Gupta et~al.(2010)Gupta, Roth, Schoenebeck, and Talwar]{Gupta2010}
Anupam Gupta, Aaron Roth, Grant Schoenebeck, and Kunal Talwar.
\newblock {Constrained non-monotone submodular maximization: Offline and
  secretary algorithms}.
\newblock In \emph{WINE}, volume 6484 LNCS, pages 246--257, 2010.

\bibitem[Hastad(1999)]{Hastad1999}
Johan Hastad.
\newblock {Clique is hard to approximate within n{\^{}}{\{}1-$\epsilon${\}}}.
\newblock \emph{Acta Mathematica}, 182:\penalty0 105--142, 1999.

\bibitem[Jenkyns(1976)]{Jenkyns1976}
T.~A. Jenkyns.
\newblock {The efficacy of the "greedy" algorithm}.
\newblock In \emph{Proceedings of the 7th Southeastern Conference on
  Combinatorics, Graph Theory and Computing}, 1976.

\bibitem[Kempe et~al.(2003)Kempe, Kleinberg, and Tardos]{Kempe2003}
David Kempe, Jon Kleinberg, and {\'{E}}va Tardos.
\newblock {Maximizing the spread of influence through a social network}.
\newblock In \emph{Proceedings of the 9th ACM SIGKDD International Conference
  on Knowledge Discovery and Data Mining (KDD)}, pages 137--146, 2003.

\bibitem[Krause et~al.(2008)Krause, Leskovec, Guestrin, VanBriesen, and
  Faloutsos]{Krause2008a}
Andreas Krause, Jure Leskovec, Carlos Guestrin, Jeanne~M. VanBriesen, and
  Christos Faloutsos.
\newblock {Efficient sensor placement optimization for securing large water
  distribution networks}.
\newblock \emph{Journal of Water Resources Planning and Management},
  134\penalty0 (6):\penalty0 516--526, 2008.

\bibitem[Mirzasoleiman and Krause(2015)]{Mirzasoleiman2015}
Baharan Mirzasoleiman and Andreas Krause.
\newblock {Distributed Submodular Cover : Succinctly Summarizing Massive Data}.
\newblock In \emph{NeurIPS}, 2015.

\bibitem[Mirzasoleiman et~al.(2016)Mirzasoleiman, Badanidiyuru, and
  Karbasi]{Mirzasoleiman2016}
Baharan Mirzasoleiman, Ashwinkumar Badanidiyuru, and Amin Karbasi.
\newblock {Fast Constrained Submodular Maximization : Personalized Data
  Summarization}.
\newblock In \emph{ICML}, 2016.

\bibitem[Mirzasoleiman et~al.(2018)Mirzasoleiman, Jegelka, and
  Krause]{Mirzasoleiman2018}
Baharan Mirzasoleiman, Stefanie Jegelka, and Andreas Krause.
\newblock {Streaming Non-Monotone Submodular Maximization: Personalized Video
  Summarization on the Fly}.
\newblock In \emph{AAAI}, pages 1379--1386, 2018.

\bibitem[Nemhauser et~al.(1978)Nemhauser, Wolsey, and Fisher]{Nemhauser1978}
G.~L. Nemhauser, L.~A. Wolsey, and M.~L. Fisher.
\newblock {An analysis of approximations for maximizing submodular set
  functions-I}.
\newblock \emph{Mathematical Programming}, 14\penalty0 (1):\penalty0 265--294,
  1978.

\end{thebibliography}
\clearpage
\appendix
\section{Appendix}
\label{apx}
\begin{proof}[Proof of Lemma \ref{lemma:threshold}]
Let $A = \{a_0, \ldots, a_{k} \} \in \mathcal I$ be returned  by \threshold.
Let $O \in \psystem$, $O \neq \emptyset$. The set $O$ 
will be partitioned into at most $k$ subsets $Y_i$,
each of size at most $p$, as follows. Let $O_0 = O$, $A_0 = \emptyset$.
Suppose $O_i, A_i$ have been obtained, such that $A_i \subsetneq O_i$, which
is initially satisfied at $i = 0$. 
By the definition of $p$-extendible system,
there exists $Y_i \subseteq O_i \setminus A_i$, with $|Y_i| \le p$, such
that $O_i \setminus Y_i + a_i \in \mathcal I$.
Then let $O_{i+1} = O_i \setminus Y_i + a_i$ and let $A_{i +1} = A_i + a_i$; 
clearly $A_{i+1} \subseteq O_{i+1}$.
If $A_{i+1} = O_{i+1}$, stop; otherwise, continue inductively until $i = k$.
Let $j \le k$ be the index at which this procedure terminates.
If $A_j \subsetneq O_j$, let $R_j = O_j \setminus A_j$
and redefine $O_i = O_i \setminus R_j$ for all $0 \le i \le j$.

\begin{claim} \label{claim:obv}
  For each $i$, $0 \le i \le j$, $A_i \cup \{ y \} \in \mathcal I$ for all 
  $y \in Y_i$. 
\end{claim}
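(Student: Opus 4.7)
The plan is to show that each $O_i$ (as defined in the construction) is itself an independent set, and that $A_i \cup Y_i \subseteq O_i$, so that the claim then follows immediately from the downward-closure property of the independence system $\mathcal I$.

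First I would verify by induction on $i$ that $O_i \in \mathcal I$. The base case is $O_0 = O \in \psystem$. For the inductive step, the definition of a $p$-extendible system together with the choice of $Y_i$ guarantees that $O_i \setminus Y_i + a_i \in \mathcal I$, which is exactly $O_{i+1}$.

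Next I would argue that the inclusions $A_i \subseteq O_i$ and $Y_i \subseteq O_i$ both hold. The first is built into the construction (and is used when it selects $Y_i \subseteq O_i \setminus A_i$). The second is immediate since $Y_i \subseteq O_i \setminus A_i \subseteq O_i$. Consequently $A_i + y \subseteq O_i$ for every $y \in Y_i$, and since $O_i \in \mathcal I$, downward closure gives $A_i + y \in \mathcal I$.

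The only subtle point — and the step I expect to require the most care — is the redefinition $O_i \gets O_i \setminus R_j$ at the end of the construction, since in principle removing elements could invalidate the inclusions above. I would handle this by noting that $R_j = O_j \setminus A_j$ is disjoint from $A_j$ (hence from every $A_i$ with $i \le j$), and is disjoint from each $Y_i$ with $i < j$ because the construction already removed $Y_i$ from $O_j$ at step $i$. Thus after redefinition the inclusions $A_i \subseteq O_i$ and $Y_i \subseteq O_i$ still hold, and the redefined $O_i$ remains in $\mathcal I$ as a subset of the original $O_i$. The claim then follows as above.
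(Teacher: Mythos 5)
Your proof is correct and follows essentially the same route as the paper's one-line argument: $A_i \cup \{y\} \subseteq O_i$, $O_i \in \mathcal I$, and downward closure. You simply make explicit two facts the paper leaves implicit — the induction showing each $O_i \in \mathcal I$ via the $p$-extendible exchange, and the check that removing $R_j = O_j \setminus A_j$ disturbs neither $A_i$ nor $Y_i$ nor membership in $\mathcal I$ — both of which are handled correctly.
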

\begin{proof}
  Since $A_i \cup \{ y \} \subseteq O_i$, and $O_i \in \mathcal I$, the claim follows by definition
  of independence system.
\end{proof}
\begin{claim} \label{claim:diff-thresh}
  $$f( O \cup A ) - f( O_0 \cup A ) \le \epsi M.$$
\end{claim}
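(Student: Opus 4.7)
The plan is to first pin down which elements distinguish $O \cup A$ from $O_0 \cup A$ after the redefinition step. The redefinition $O_0 \gets O \setminus R_j$ is only triggered when the exchange procedure exhausts all of $a_0, \ldots, a_k$ without terminating via $A_{i+1} = O_{i+1}$, so that at termination $A_j = A$ and $R_j \subseteq O_j \setminus A$. Hence $R_j$ is disjoint from $A$, and $(O \cup A) \setminus (O_0 \cup A) = R_j$. Writing $R_j = \{y_1, \ldots, y_r\}$, telescoping gives
\[ f(O \cup A) - f(O_0 \cup A) = \sum_{i=1}^r \bigl[ \ff{O_0 \cup A \cup \{y_1, \ldots, y_i\}} - \ff{O_0 \cup A \cup \{y_1, \ldots, y_{i-1}\}} \bigr], \]
and since $A$ is contained in every set appearing on the right, submodularity bounds the right-hand side above by $\sum_{y \in R_j}\bigl(\ff{A+y} - \ff{A}\bigr)$. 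It thus suffices to bound each marginal $f_y(A) := \ff{A+y} - \ff{A}$ uniformly.

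The next step is to invoke the threshold mechanism of \threshold. First I would show that $A + y \in \psystem$ for every $y \in R_j$: by construction $A + y = A_j + y \subseteq O_j \in \psystem$, so downward closure of the independence system yields $A + y \in \psystem$. Now fix $y \in R_j$ and consider the final iteration of the outer loop of \threshold, with threshold value $\tau_{\text{last}}$ (the smallest $\tau$ in the schedule still satisfying $\tau \ge \epsi M / n$). When the inner loop reaches $y$ during this iteration, the current partial solution $A' \subseteq A$ satisfies $A' + y \subseteq A + y \in \psystem$, so $A' + y \in \psystem$. Since $y \notin A$, the element $y$ was not added at this pass, so the threshold check must have failed: $f_y(A') < \tau_{\text{last}}$. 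Submodularity then gives $f_y(A) \le f_y(A') < \tau_{\text{last}}$.

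Finally, the loop terminates with $\tau_{\text{last}}$ essentially at the floor $\epsi M / n$, so $\sum_{y \in R_j} f_y(A) \le |R_j| \cdot \tau_{\text{last}} \le n \cdot (\epsi M / n) = \epsi M$, completing the claim. The main obstacle lies in the bookkeeping: first, confirming that the case triggering the redefinition forces $A_j = A$ so that the independence invariant transfers to the full returned set $A$; and second, handling the small $(1-\epsi)^{-1}$ slack in $\tau_{\text{last}}$ arising from the geometric schedule (strictly one only gets $\tau_{\text{last}} < \epsi M / (n(1-\epsi))$), which is harmless in context because it is absorbed into the $\epsi$-terms of the downstream approximation-ratio calculation.
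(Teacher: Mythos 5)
Your proof is correct and follows essentially the same route as the paper's: write $O \cup A = (O_0 \cup A) \cup R_j$, break the difference into single-element marginals by submodularity, reduce each to $f(A + y) - f(A)$, establish $A + y \in \psystem$ by downward closure from $O_j$, and invoke the final-threshold rejection of each $y \in R_j$ to bound each marginal by roughly $\epsi M / n$. If anything you are more careful than the paper, which silently treats the last threshold as exactly $\epsi M/n$ and ignores the $(1-\epsi)^{-1}$ slack you correctly flag as harmless downstream.
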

\begin{proof}
  \begin{align*}
    f( O \cup A ) - f(O_0 \cup A) &= f( O_0 \cup R_j \cup A ) - f( O_0 \cup A ) \\
                                 &\le \sum_{r \in R_j} f( O_0 \cup A \cup \{ r \}) - f( O_0 \cup A ) \\
                                 &\le \sum_{r \in R_j} f( A \cup \{ r \} ) - f( A ) \le \epsi M,
  \end{align*}
  where the last inequality is by the stopping condition of \threshold and
  the fact that $A = A_j \subseteq O_j \cup R_j$, so $A \cup \{ r \} \in \mathcal I$ for
  all $r \in R_j$. The other inequalities follow from submodularity and the definition
  of $R_j, O_0$.
\end{proof}

Then
\begin{align*}
  f( O \cup A ) - f(A) &\le f(O_0 \cup A) - f(A) + \epsi M \\
  &= \sum_{i=0}^{j-1} f( O_i \cup A ) - f( O_{i+1} \cup A ) + \epsi M\\
  &= \sum_{i=0}^{j-1} f( O_{i + 1} \cup A \cup Y_i ) - f( O_{i+1} \cup A ) + \epsi M\\
  &\le \sum_{i=0}^{j-1} \sum_{y \in Y_i} f( O_{i + 1} \cup A \cup \{y\} ) - f( O_{i+1} \cup A ) + \epsi M\\
  &\le \sum_{i=0}^{j-1} \sum_{y \in Y_i} f( A_i \cup \{ y \} ) - f( A_i ) + \epsi M\\
  &\le \sum_{i=0}^{j-1} \frac{p}{1 - \epsi} \cdot ( f( A_i \cup \{ a_i \} ) - f( A_i ) ) + \epsi M \le \frac{p}{1 - \epsi} f(A) + \epsi M,
\end{align*}
where the first inequality is by Claim \ref{claim:diff-thresh}, 
the first two equalities are by telescoping and the definition of $O_i, Y_i$,
the second and third inequalities are by submodularity. The fourth inequality holds
by the following argument: when $a_i$ was added to $A_i$, it holds that the threshold
$\tau$ has its initial value $M$, in which case $f(y) \le M$ for any $y \in Y_i$, or
all $y \in Y_i$ were not added during the previous threshold $\tau / (1 - \epsi)$.
Hence $f(A_i \cup \{ a_i \}) - f(A_i) \ge (1 - \epsi) (f(A_i \cup \{ y \}) - f(A_i))$ 
by submodularity. 
Since $M \le OPT$, the lemma follows.
\end{proof}


\end{document}